\def\@seccntformat#1{\@ifundefined{#1@cntformat}%
	{\csname the#1\endcsname\quad}  
	{\csname #1@cntformat\endcsname}
}
\let\oldappendix\appendix 
\renewcommand\appendix{%
	\oldappendix
	\newcommand{\section@cntformat}{\appendixname~\thesection: \ }
}
\titleformat{\subsection}[runin]{\it}{}{}{}
\titlespacing{\subsection}{10pt}{10pt}{5pt}
\def\q0{\underline{0}}
\def\H{{\cal H}}
\def\S{\mathcal S}
\def\id{{\mathbbm 1}}
\def\H{{\cal H}}
\def\U{{\mathcal U}}
\def\R{\mathcal{R}}
\def\tr{\text{tr}}
\def\proj#1{\ket{#1}\!\bra{#1}}
\newtheorem{theo}{Theorem}
\newtheorem{prop}[theo]{Proposition}
\newtheorem{sdp}[theo]{SDP}
\renewcommand{\eqref}[1]{Eq.(\ref{#1})}
\newcommand{\figref}[1]{Fig.\ref{#1}}
\newcommand{\tableref}[1]{Tab.(\ref{#1})}
\newcommand{\sdpref}[1]{SDP \ref{#1}}
\newcommand{\be}{\begin{equation}}
\newcommand{\ee}{\end{equation}}
\begin{document}

\title{Entanglement Detection Beyond Measuring Fidelities}
\author{M. Weilenmann, B. Dive,  D. Trillo, E. A.  Aguilar,  M. Navascu\'es}
\affiliation{Institute for Quantum Optics and Quantum Information (IQOQI), 3 Boltzmanngasse Vienna 1090, Austrian Academy of Sciences}

\date{\today}

\begin{abstract}
One of the most widespread methods to determine if a quantum state is entangled, or to quantify its entanglement dimensionality, is by measuring its fidelity with respect to a pure state.  In this Letter we find a large class of states whose entanglement cannot be detected in this manner; we call them \emph{unfaithful}. We find that unfaithful states are ubiquitous in information theory.
For small dimensions, we check numerically that most bipartite states are both entangled and unfaithful. Similarly, numerical searches in low dimensions show that most pure entangled states remain entangled but become unfaithful when a certain amount of white noise is added. We also find that faithfulness can be self-activated, i.e., there exist instances of unfaithful states whose tensor powers are faithful. To explore how the fidelity approach limits the quantification of entanglement dimensionality, we generalize the notion of an unfaithful state to that of a \emph{$D$-unfaithful state}, one that cannot be certified as $D$-dimensionally entangled by measuring its fidelity with respect to a pure state. For describing such states, we additionally introduce a hierarchy of semidefinite programming relaxations that fully characterizes the set of states of Schmidt rank at most $D$.
\end{abstract}

\maketitle

Entanglement is a fundamental aspect of quantum information and one of the key dividing factors between the quantum and the classical worlds. This is shown by the wide range of protocols, such as teleportation \cite{teleport}, device independent quantum key distribution \cite{diqkd0, diqkd1, diqkd2}, one-way quantum computation \cite{measurement_based_comp}, and metrology \cite{metrology2, metrology1}, in which it is a necessary resource. The amount of entanglement, which can be quantified, for example, by the entanglement dimension, is an important factor for applications~\cite{Vidal2003, Jozsa2003, Wang2005, Barreiro2005, Lloyd2008, Lanyon2009, Zhang2013, Huber2013, Nest2013}. A larger entanglement dimension allows, for instance, for the encoding of information in a larger number of entangled degrees of freedom and can improve the tolerance to noise in cryptographic protocols~\cite{Cerf, Bruss}.

As entanglement is such a useful resource, it is important to have experimental and theoretical tools to detect it \cite{entanglementdetection_toth, Friis2019}, and for these to correctly identify entanglement even in noisy states. Perhaps the most commonly used method to detect entanglement is via linear witnesses, but their characterization has been proven a hard problem \cite{NP_hard1, NP_hard2}, even though there are some general methods to tackle it \cite{DPS1, DPS2, DPS3}. The situation is substantially harder still when we want to detect the entanglement dimension of a state, as a practical characterization of the $D$-positive maps necessary for this task is still missing (note that one of the methods for entanglement quantification proposed in \cite{Toth2015} can be used to characterize $D$-positive maps, but at a prohibitive computational cost). In part due to these difficulties, it is common practice to detect the entanglement or entanglement dimension of a state via its fidelity with respect to a pure reference state~\cite{Flammia2011fid, Steinlechner2017fid, Bavaresco2018fid, Erhard2018fid}. The question we address in this Letter is whether such a method misses out many instances of entangled states and, if so, what properties these states have. In addition, we address these questions also for the detection of multi-dimensional entanglement.

We find, surprisingly, that almost no bipartite entangled states can be detected via fidelities with pure states. We call such states \emph{unfaithful}. Using a simple semidefinite programming (SDP) ansatz~\cite{sdp} of unfaithful states, we prove that even states as innocent as a mixture between a pure entangled state and the maximally mixed state can be unfaithful. We also show that faithfulness can be self-activated, namely, there exist unfaithful states whose bipartite entanglement can be detected via fidelities when taking their tensor power. Going beyond separability, we extend the concept of unfaithful states to those states whose entanglement dimension cannot be certified with fidelity witnesses. Lacking efficient general tools to determine their entanglement dimension, we introduce a complete hierarchy of semidefinite programming relaxations of the set of all states with Schmidt rank at most $D$.

\subsection*{Definitions.}
A bipartite state $\rho_{AB}$ is \emph{separable} if there exists a probability distribution $\{p_i\}_i$ and states $\{\ket{\psi_i},\ket{\phi_i} \}_i$ such that

\be
\rho_{AB}=\sum_i p_i \proj{\psi_i}_A\otimes\proj{\phi_i}_B.
\ee
\noindent Otherwise, $\rho_{AB}$ is said to be \emph{entangled}. We denote the set of separable quantum states  by $\S$.

The usefulness of a state for many information processing tasks is closely related to its entanglement dimensionality. A mixed bipartite state $\rho_{AB}$ is said to have Schmidt rank at most $D$~\cite{Terhal1999} if there exists a decomposition
\begin{equation}
\rho_{AB}= \sum p_i \proj{\psi_i}_{AB},
\end{equation} 
with $p_i \geq 0$, $\sum_i p_i =1$ and all $\ket{\psi_i}$ have Schmidt rank at most $D$, i.e., we can write each $\ket{\psi_i}$ as $\ket{\psi_i}=\sum_{j=1}^D \sqrt{\lambda_j(\ket{\psi_i})} \ket{\phi_j}_A\ket{\xi_j}_B$, where $(\sqrt{\lambda_j(\ket{\psi_i})} )_j$ are called the Schmidt coefficients of $\ket{\psi_i}$ and $\{ \ket{\phi_j} \}_j$ and $\{ \ket{\xi_j} \}_j$ are sets of orthonormal vectors. We denote by $\S_D$ the set of all such states $\rho_{AB}$. Note that all these sets are convex, $\S_1=\S$, and $\S_D\subseteq \S_{D+1}$. If $\rho_{AB} \not\in \S_{D-1}$, we say that $\rho_{AB}$ is \emph{$D$-dimensionally entangled}. If, additionally, $\rho_{AB} \in \S_{D}$ we say that it has \emph{Schmidt rank $D$}. 
Tying this back to the definition of simple entanglement, all entangled states are $2$-entangled, and have Schmidt rank at least $2$.

An \emph{entanglement witness} $W$ is a Hermitian operator with the property that $\tr[W\sigma_{AB}] \geq 0$ for all separable states $\sigma_{AB} \in \S$. Hence, a measurement on a state $\rho_{AB}$ having a negative expectation value $\tr[W\rho_{AB}]<0$, implies that $\rho_{AB}$ is entangled. Since $\S$ is convex, the hyperplane separation theorem implies that for any entangled state $\rho_{AB}$ there exists a witness $W$ such that $\tr[W \rho_{AB}]<0$. The same considerations apply when trying to witness entanglement dimension: for any state $\rho_{AB} \not\in \S_{D-1}$ there exists a \emph{dimension-$D$-witness} $W_{D}$ such that for all $\sigma_{AB} \in \S_{D-1}$, $\tr[W_{D} \sigma_{AB}] \geq 0$ and $\tr[W_{D}\rho_{AB}]<0$. In this terminology, an entanglement witness is a dimension-$2$-witness.

One commonly used form of a witness, both for entanglement and $D$-dimensional entanglement, is the so-called pure fidelity witness:
\begin{equation}
W_{D} := F \id - \ket{\psi_W}\bra{\psi_W},
\label{eq:fidelitywitness}
\end{equation}
where $\ket{\psi_W}$ is a fixed entangled state and $F$ a real number. In order to witness as many entangled states as possible it is desirable to have $F$ be as small as possible. The minimum value it can have while still satisfying $\tr[W_D \sigma_{AB}] \geq 0$ for all $\sigma_{AB} \in \S_{D-1}$ is $\min F = \sum_{i=1}^{D-1}\lambda_i(\ket{\psi_W})$ \cite{Bourennane2004}, where $(\sqrt{\lambda_i(\ket{\psi_W})} )_i$ are the Schmidt coefficients in non-increasing order. 
Such a witness detects a state $\rho_{AB}$ as being $D$-dimensionally entangled if $\bra{\psi_W} \rho_{AB} \ket{\psi_W} > F$. In other words, $\rho_{AB}$ is certified as being $D$-dimensionally entangled if the overlap with the target state $\ket{\psi_W}$ is big enough. As intuition suggests, this works well when $\rho_{AB}$ is close to the (usually highly entangled) state $\ket{\psi_W}\bra{\psi_W}$. However, due to the frequency with which witnesses of the type of \eqref{eq:fidelitywitness} are used~\cite{Flammia2011fid, Steinlechner2017fid, Bavaresco2018fid, Erhard2018fid}, the question naturally arises whether for some $D$ \emph{all} $D$-dimensionally entangled states $\rho_{AB}$ can be certified with such a class of witnesses and, if not, whether such witnesses are generally useful. The broadest interest is in answering the latter question for usual entanglement, meaning for $D=2$. 

We say that $\rho_{AB}$ is \emph{$D$-unfaithful} if it satisfies $\tr[W_D\rho_{AB}]\geq 0$ for all witnesses $W_D$ of the form of \eqref{eq:fidelitywitness}, and denote the set of such states by $\U_D$. In the same way as we call 2-entangled states simply entangled states, $2$-unfaithful is referred to as \emph{unfaithful}. The interesting set of states are those in $\U_D \setminus \S_{D-1}$, which have $D$-dimensional entanglement that cannot be seen with fidelities to pure states. The relations between these sets are shown in \figref{fig:egg}, in particular, we can see the inclusion relationships $\S_{D-2} \subseteq \S_{D-1}\subseteq \U_D\subseteq \U_{D+1}$. Note, however, that there is no clear relationship between $\S_D$ and $\U_{D'}$ for $D \ge D'$.

\begin{figure}[h]
\centering
\includegraphics[width=1\columnwidth]{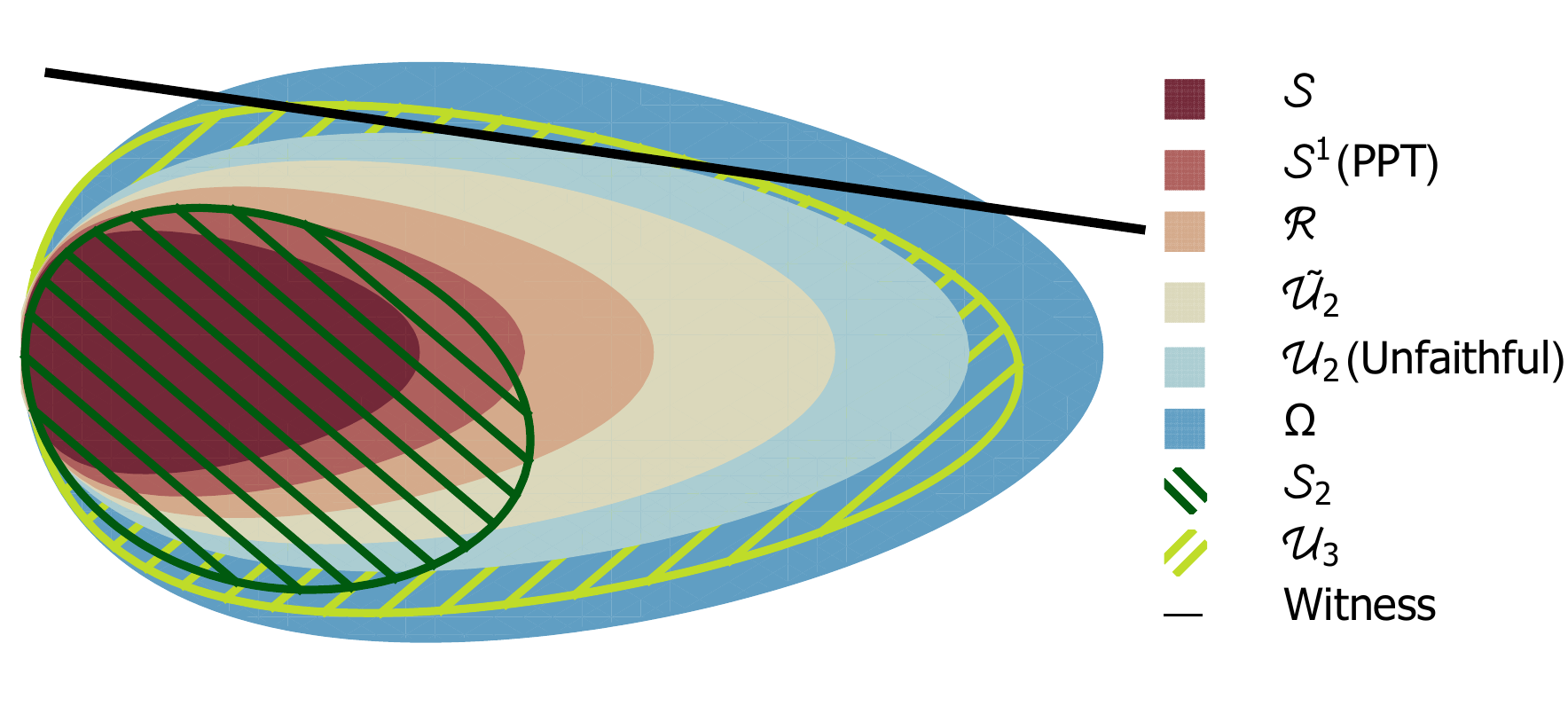}
\caption{A schematic picture of the relationship between the sets of states investigated. The large gap between the pure-state fidelity witness and $\S$ shows how poorly such witnesses can perform. The three intermediate sets $\S^1$, $\R$, and $\widetilde{\U}_2$ are easier to compute approximations of $\S$ (for the first two) and of $\U_2$ (for the latter). All the same objects can be drawn for $D$-dimensional entanglement, for clarity, only $\S_2$ and $\U_3$ are shown (with stripes). $\Omega$ labels the set of all bipartite quantum states.}
\label{fig:egg}
\end{figure}

\subsection*{Approximations with semidefinite programming.} Our aim is to analyse which $D$-dimensionally entangled states, for some $D\geq 2$, are uncertifiable by dimension-$D$-witnesses, that is those in $( \U_D \setminus \S_{D-1})_{D\geq 2}$. We do this by constructing semidefinite programs that form an outer approximation of $\S_{D-1}$ (\sdpref{sdp:one}) and an inner approximation of $\U_D$ (\sdpref{sdp:two}).

Let us begin with the former. For the set $\S_1$, we are concerned with  detecting whether a state is entangled. There exist several complete criteria for entanglement detection (see, e.g., \cite{Eisert}). We use the Doherty-Parrilo-Spedalieri (DPS) hierarchy \cite{DPS1,DPS2,DPS3}, which is an infinite nested sequence of sets $\S^1\supseteq \S^2\supseteq \S^3\supseteq \cdots$, with the property that $\lim_{k\to\infty} \S^k = \S$, see Appendix~\ref{app1} for a short description. Furthermore, each of these sets is characterizable via semidefinite programming. This means that, if $\rho_{AB}$ is entangled at all, then there exists an SDP that will detect this. The first set $\S^1$ of the DPS hierarchy is the set of states with positive partial transpose~\cite{PPT1,PPT2}, which we call PPT in the following.

Here, we introduce a family of hierarchies of SDPs that generalizes the DPS criterion to general $D$ by converging to $\S_{D}$ from the outside:

\begin{sdp}
	\label{sdp:one}
	Let $\sigma_{AB}$ be a bipartite state, with local dimensions $d_A$ and $d_B$ respectively. We say that $\sigma_{AB} \in \S_D^k$, if there exists a positive semidefinite matrix $\omega$ on the subsystems $AA'B'B$ such that
\begin{align}
	&\frac{\omega}{D} \in \S^k, \quad \Pi^\dagger_{D}\omega \Pi_{D}=\sigma_{AB} \nonumber\\
	&\Pi_D = \id_A\otimes\ket{\psi^+_{D}}_{A'B'}\otimes \id_B 
	\label{newHier}
\end{align}
where the dimensions of $A'$ and $B'$ are both $D$, $\ket{\psi^+_{D}}_{A'B'}:= \sum_{j=1}^{D}\ket{j,j}$ is the non-normalized maximally entangled state in this dimension and the bipartition of $\omega$ relevant for the definition of $\S^k$ is $AA'|B'B$. 
\end{sdp}
\noindent Note that the first condition in (\ref{newHier}) implies $\tr[\omega]=D$.
 For $D=1$, \sdpref{sdp:one} coincides with the DPS hierarchy for characterising separable states. From the above definition, it is straightforward that $\S_{D}^k\supseteq \S_{D}^{k+1}$ and that each of these sets can be characterized by an SDP.  Furthermore, the hierarchy converges, i.e., $\S_{D}\subseteq \S^k_{D}$ for all $k, D$ and $\lim_{k\to\infty} \S^k_{D}=\S_{D}$, which we prove in Appendix~\ref{app2}. More precisely, we prove that for any $\sigma_{AB}\in \S_D^k$, there exists $\tilde{\sigma}_{AB}\in \S_D$ such that $\|\sigma_{AB}-\tilde{\sigma}_{AB}\|_1\leq O\left(\frac{d^2D^4}{k^2}\right)$, where $\| \cdot \|_1$ is the trace norm and $d=\min \{d_A,d_B \}$. For an implementation of the lowest order of the hierarchy (\ref{newHier}) in Python, we refer the reader to~\cite{GitCode}.

A different approach to constructing an SDP that witnesses the entanglement dimension was introduced in \cite{Toth2015}. While both our method and theirs use the DPS hierarchy as a way approximating the separable set that is optimized over, their method additionally requires as many copies of the state as the Schmidt rank being witnessed. This makes our approach greatly more memory efficient.

For the problem of certifying that a state is in $\U_D$ we utilize an inner approximation that can be realized with an SDP.  We do this by defining a new set, $\widetilde{\U}_D$, according to the following:

\begin{sdp}
\label{sdp:two}
	Let $\rho_{AB}$ be a bipartite state. If there exists $\mu\in [0,1]$ and positive semidefinite operators $M_A$, $M_B$ such that 
	\begin{align} 
	  &M_A \otimes \id_B + \id_A \otimes M_B \geq \rho_{AB} \nonumber \\
	&\mu (D-1) = \tr[M_A], \hspace{12mm} \mu\id-M_A\geq 0,\nonumber\\
	&(1-\mu)(D-1)=\tr[M_B], \hspace{2.5mm}(1-\mu)\id-M_B\geq 0,
	\label{criter}
	\end{align}
\noindent then we say that $\rho_{AB}\in \widetilde{\U}_D$.
\end{sdp}

\noindent This set is an inner approximation to $\U_D$. That is, $\widetilde{\U}_D \subseteq \U_D$, which we prove in Appendix~\ref{app3}.

When $D=2$, \sdpref{sdp:two} is a generalization of the \emph{reduction criterion}, which states that either $\id_A\otimes \rho_B-\rho_{AB} \geq 0$, or $\rho_A \otimes \id_B-\rho_{AB} \geq 0$ holds. For two qubits or a qubit and a qutrit this is also equivalent to being PPT \cite{reduction1, reduction2}. We denote by $\mathcal{R}$ the set of all states that satisfy the reduction criterion. It follows that $\mathcal{R}\subseteq \widetilde{\U}_2 \subseteq \U_2$. That any reducible state is unfaithful was already proven in \cite{Piani}; note however that our findings in \tableref{fig:sampling} imply that $\widetilde{\U}_2$ is strictly larger than the set of reducible states, and therefore a better approximation of $\U_2$. An illustration of $\S_D^k, \R, \text{ and } \widetilde{\U}_D$, along with their relation to the previously defined sets, is shown in \figref{fig:egg}.

\subsection*{Most states are unfaithful.} Armed with the above two technical tools, we next argue that entangled unfaithful states are, in fact, ubiquitous in quantum information theory. To this end, we sample random bipartite states according to the Hilbert-Schmidt~\cite{Zyczkowski1,Zyczkowski2} and the Bures~\cite{Osipov} measures. For completeness we briefly describe these sampling techniques in Appendix~\ref{app4}. For bipartite systems of local dimension $d$ on both subsystems, we find (with both measures) that the larger $d$, the larger the fraction of unfaithful states (see \tableref{fig:sampling}). In fact, from $d=3$ on, we find that most states are entangled, as detected by them not having a positive partial transpose ($\rho_{AB} \notin \S^1$), and are certified as unfaithful by \sdpref{sdp:two}. From $d=5$ on, we find that essentially all states we generated are entangled but at the same time unfaithful, regardless of whether we sample according to the Hilbert-Schmidt or the Bures metric. This highlights the importance of characterising the entanglement of unfaithful qudit states beyond relying on fidelity witnesses.

\begin{table}[h]
	\begin{tabular}{|c|c|c|c|c|}
		\hline
		$d$ 	&\hspace{0.5mm} $\S^1$ (HS) \hspace{0.5mm}  &\hspace{0.5mm} $\widetilde{\U}_2 \setminus \S^1$ (HS) \hspace{0.5mm}  & \hspace{0.5mm} $\S^1$ (B) \hspace{0.5mm}  &\hspace{0.5mm} $\widetilde{\U}_2 \setminus \S^1$ (B) \\
		\hline 
		$2$ & $24.2 \%$ & $21.2 \%$  & $7.4 \%$ & $15.4 \%$   \\
		$3$ &	$0.01 \%$ & $94.5 \%$ & $0 \%$ & $54.8 \%$ \\
		$4$ &	$0 \%$ & $100 \%$ & $0 \%$ & $97.0 \%$ \\
		$5$ &	$0 \%$ & $100 \%$  & $0 \%$ & $100 \%$ \\
		\hline
	\end{tabular}
\caption{Percentage of unfaithful states detected via \sdpref{sdp:two} and the PPT condition when sampling uniformly according to either the Hilbert-Schmidt metric (HS) or the Bures metric (B) (sampling $10^6$ points for each dimension in each case). $d$ denotes the local dimension of each subsystem of the bipartite system. Note that we are actually interested in $\U_2 \setminus \S \supseteq \widetilde{\U}_2 \setminus \S^1$, which may be even larger.
}
\label{fig:sampling}
\end{table}

For $D\geq 2$, random sampling suggests that the set $\S_D^1$ almost exhausts the set of quantum states up to at least local dimension $d=4$. This assertion is based on sampling $1000$ random states in $d=3$ (sampled according to Hilbert-Schmidt as well as Bures metric), which were all found to belong to $\S_2^1$. In $d=4$, we found the same when sampling $1000$ real states, see~\cite{GitCode} for details regarding the implementation and recall that the respective techniques are described in Appendix~\ref{app4}. 
Nevertheless, there are states in $\widetilde{U}_2 \setminus \S_2^1$, i.e., states that are at least $3$-dimensionally entangled but unfaithful. Constructions of such states can for instance be found in~\cite{Chen2017, Huber2018}.

For $D\geq 2$, there are furthermore examples of states that are faithful but $3$-unfaithful. Consider, for instance, 
\begin{equation}\label{eqn:Mirjams state}
\frac{1}{2}\ketbra{\Psi_3}{\Psi_3} +\frac{1}{2}\left (\frac{\ket{23}+\ket{32}}{\sqrt{2}}\right)\left(\frac{\bra{23}+\bra{32}}{\sqrt{2}}\right),
\end{equation}
where $\ket{\Psi_3}:= \tfrac{1}{\sqrt{3}} (\ket{00}+\ket{11}+\ket{22})$. This state belongs to $\widetilde{\U}_3 \setminus (\S_2^1  \cup  \U_2)$. That is, it is a $3$-unfaithful state that is indeed of Schmidt rank $3$, but that is $2$-faithful. Note that for the last claim we cannot use our SDP approximation because it is inner. However, it is easy to find a fidelity witness that certifies its entanglement; namely
\begin{equation*}
	W_2:=\frac{1}{3}\id - \ketbra{\Psi_3}{\Psi_3}.
\end{equation*}
For the certification of the Schmidt rank as well as the $2$-unfaithfulness of the state using \sdpref{sdp:one} (where the first level of the hierarchy was sufficient) and~\sdpref{sdp:two}, we refer the reader to~\cite{GitCode}.

\subsection*{Noisy pure states are unfaithful.} Having established that most states are unfaithful, we now investigate if this holds for states that are commonly used and important in quantum information. Consider the Bell state $\ket{\Psi_2} = 1 / \sqrt{2} (\ket{00}+\ket{11})$ embedded in a $d\times d$ dimensional bipartite Hilbert space and define the mixed state 
\begin{equation}\label{eqn: noisy state}
\rho_{AB}(p) =p\frac{ \id_{AB}}{d^2} + (1-p)\ketbra{\Psi_2}{\Psi_2} .
\end{equation} 
For $d>2$, there exists a large parameter regime, illustrated in \figref{fig:PureWithNoise}, where this state is entangled but unfaithful. This is not a specific property of $\ket{\Psi_2}$: we found numerically (with $10^6$ random Haar-distributed pure states for each of $d=3, 4, 5$) that all pure entangled states are unfaithful and entangled, for a certain range of white noise. The only exception we know of is the maximally entangled state $1/\sqrt{d} \sum_{i=0}^{d-1}\ket{ii}$ subjected to white noise in a Hilbert space of dimension $d\times d$.

\begin{figure}[h]
\centering
\includegraphics[width=1\columnwidth]{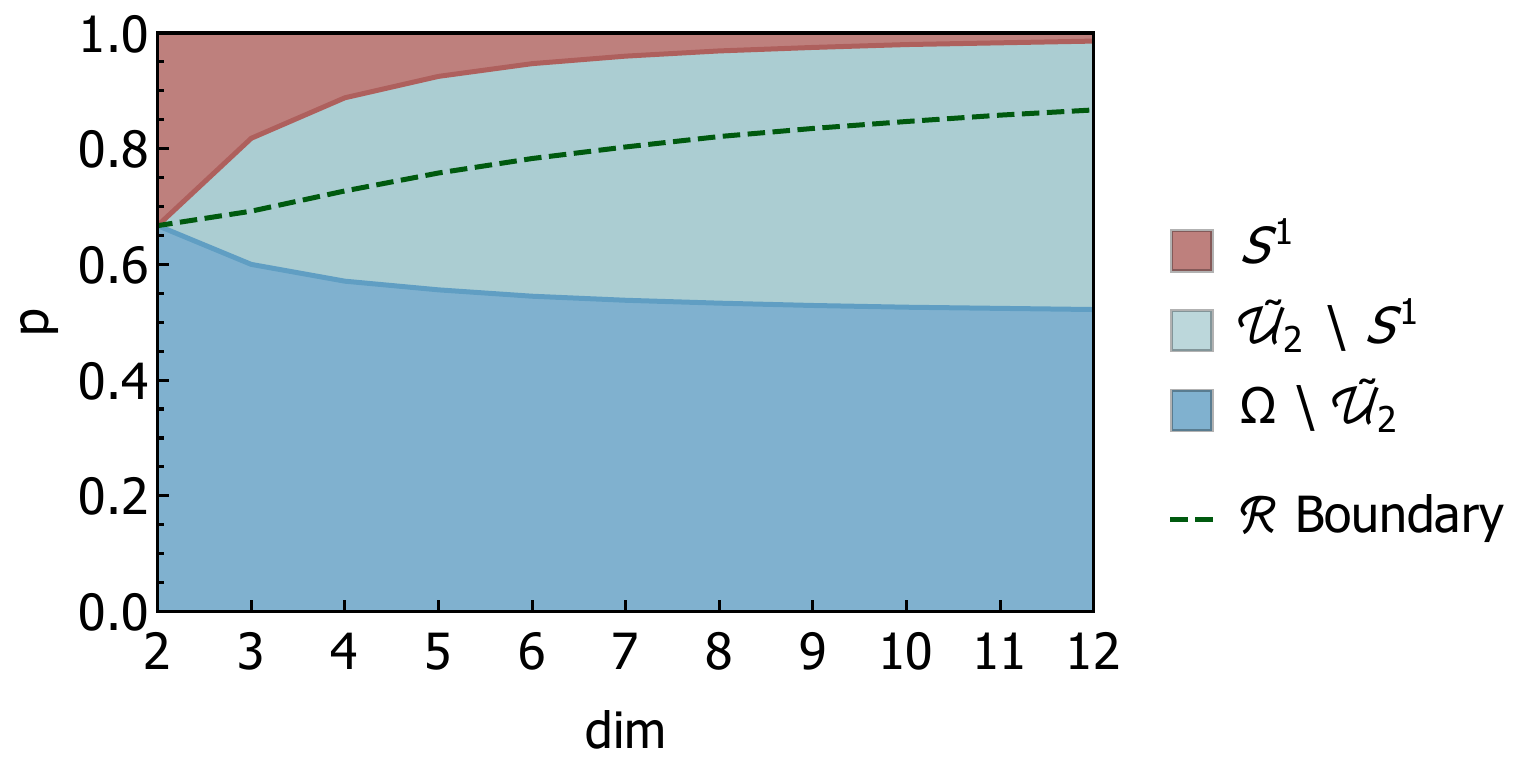}
\caption{Graph of the unfaithful-but-entangled region of the family of states $\rho_{AB}(p)$ from \eqref{eqn: noisy state}. Along the x-axis we display the local dimension $d$ of the Hilbert space, along the y-axis the proportion of white noise $p$. The bottom region of the graph indicates that, as $d$ increases, the ability of fidelity witnesses to detect entanglement becomes less tolerant to noise: the maximal $p$ such that $\rho_{AB} (p) \in \Omega \setminus \widetilde{\U}_2$ decreases with $d$, where $\Omega$ denotes the set of all bipartite states and $\widetilde{\U}_2$ is the approximation to $\U_2$ defined by \sdpref{sdp:two}. At the same time, the top region shows that, in high dimensions, states are entangled (as certified by them having a negative partial transpose, $\rho_{AB}(p) \notin S^1$) even at extremely high noise levels. Thus in these dimensions states are unfaithful and entangled in a large parameter regime. To illustrate its difference from \sdpref{sdp:two}, the reducibility criterion is plotted as a dotted line; only the states above it are reducible. Note that the reducibility and the PPT boundaries can be computed analytically. }
\label{fig:PureWithNoise}
\end{figure}

For $3$-unfaithfulness, we observe an analogous behaviour: the embedded state $\ket{\Psi_3}:= 1/\sqrt{3} (\ket{00}+\ket{11}+\ket{22})$ mixed with noise as in \eqref{eqn: noisy state} belongs to $\widetilde{\U}_3 \setminus \S_2^1$ (meaning that it is certified to have Schmidt rank $3$ but that this cannot be detected with a dimension-$3$-witness) for $p$ in the following ranges: in $d=4$ for $p\in (0.364, 0.449)$; in $d=5$ for $p\in (0.357,0.493)$. In $d=3$, the state becomes $3$-unfaithful at the same point as we cease to certify (using $\S_2^1$) that it is $3$-dimensionally entangled.

\subsection*{Faithfulness can be self-activated.} 
Note that, if $\rho_{AB}$ is faithful, then so is $\rho^{\otimes n}_{AB}$. Indeed, recall (from \eqref{eq:fidelitywitness} and~\cite{Bourennane2004}) that for a faithful $\rho_{AB}$ there exists a $\ket{\psi}$, such that  $\bra{\psi}\rho_{AB}\ket{\psi}>\lambda_1(\ket{\psi})$, where $\sqrt{\lambda_1(\ket{\psi})}$ is the maximal Schmidt coefficient of $\ket{\psi}$. Then, 
\begin{equation}
\bra{\psi}^{\otimes n}\rho^{\otimes n}_{AB}\ket{\psi}^{\otimes n}>\lambda_1(\ket{\psi})^n=\lambda_1(\ket{\psi}^{\otimes n}),
\end{equation}
which implies that $\lambda_1(\ket{\psi}^{\otimes n})- \ket{\psi}^{\otimes n} \bra{\psi}^{\otimes n}$ witnesses the entanglement of $\rho^{\otimes n}_{AB}$.
The property of being faithful is thus preserved under tensor powers. It is therefore natural to ask if unfaithfulness is also preserved this way, or if faithfulness can be \emph{self-activated}.

Such a self-activation effect is impossible for states in $\R$. Indeed, whenever two matrices satisfy $A \geq B \geq 0$ we have $A\otimes A \geq B\otimes B \geq 0$, so that
\begin{equation}
\rho_A \otimes \id_B -\rho_{AB} \geq 0\Rightarrow (\rho_A \otimes \id_B)^{\otimes n} -\rho^{\otimes n}_{AB} \geq 0  .
\end{equation}

Nevertheless, it is possible to self-activate the faithfulness of certain states. Let us introduce the states
\begin{align}
\ket{\phi_1} := &\; 0.628\ket{11} - 0.778 \ket{22} \\
\ket{\phi_2} := &\; 0.807\ket{01}-0.185 \ket{02} - 0.102 \ket{10} -0.027 \ket{11}  \nonumber \\
  + &\; 0.011 \ket{12} + 0.551 \ket{20} -0.024 \ket{21} - 0.022 \ket{22}. \nonumber
\end{align}
Then, the state
\begin{equation*}
\rho_{AB} := 0.999( 0.50179 \ketbra{\phi_1}{\phi_1} + 0.49821 \ketbra{\phi_2}{\phi_2})+ 0.001 \frac{\id}{9}
\end{equation*}
is unfaithful, but $\rho_{AB}^{\otimes 2}$ is faithful over the partition $A|B$. This is proved, and the witness showing faithfulness given, in \cite{GitCode}. Thus, fidelity witnesses are sometimes useful, even if a target state is unfaithful.

\subsection*{Conclusion.} In this Letter, we have introduced the set of unfaithful states, namely, those states whose entanglement cannot be detected via fidelities to pure states. We analyzed different properties of such states: their frequency, their robustness and the phenomenon of activation. As we discovered, the set of unfaithful states is large and comprises quantum states which are very relevant in quantum information theory. We argued, based on our studies for small $D$, that the sets $\U_D$ with $D>2$ behave similarly to the  unfaithful states. Our work is therefore to be understood as a warning towards the blind application of fidelity-type entanglement witnesses. 

While there exist several general methods to detect entanglement beyond state fidelities, one cannot say the same regarding the quantification of the entanglement dimensionality. In this regard, our work also provides methods to construct general dimension-$D$-witnesses, by means of a complete hierarchy of semidefinite programs. As this hierarchy is substantially more memory efficient than prior methods, we expect it to play a significant role in the quantification of entanglement in noisy experimental setups.

There are, naturally, some open questions about unfaithful states that could be studied in future work. One avenue is to extend the results from bipartite entanglement to multi-partite entanglement, which has a much richer and more elaborate structure. A different path is to investigate activation further. We have shown that some (but not all) unfaithful states become faithful when we take a tensor power of them; this leads to the hypothesis that for any entangled and unfaithful state, there exist another unfaithful state such that their tensor product is faithful. If this is true, then pure-state fidelities could always detect entanglement provided that we have access to the correct auxiliary state.

\subsection*{Acknowledgements.} We thank Otfried Gühne for pointing out errors in Table I in the previous version of our article. This work was supported by the
Austrian Science fund (FWF) stand-alone project P~30947.

\onecolumngrid

\appendix

\section{The Doherty-Parrilo-Spedalieri hierarchy} \label{app1}
The Doherty-Parrilo-Spedalieri (DPS) hierarchy \cite{DPS1,DPS2,DPS3} stems from the following observation: let $\rho_{AB}$ be a separable state, with  decomposition 

\be
\rho_{AB}=\sum_i p_i\proj{\phi_i}\otimes\proj{\psi_i}.
\ee

\noindent Then, the states $\rho_{AB_1...B_k}$ defined by

\be
\rho_{AB_1...B_k} =\sum_i p_i\proj{\phi_i}\otimes\proj{\psi_i}^{\otimes k}
\ee
\noindent satisfy:

\begin{enumerate}
	\item
	$\rho_{AB_1...B_k}$ is positive semidefinite and PPT with respect to any bipartition of the systems $A, B_1,B_2,...B_k$.
	\item
	$\rho_{AB_1...B_k}$ has support on the space $\H_A\otimes\H_{sym}^{k,d_B}$, where $\H_A$ is $A$'s Hilbert space and $\H_{sym}^{k,d_B}$ is the symmetric space of $k$ $d_B$-dimensional systems, with $d_B=\mbox{dim}(\H_B)$.
	\item
	$\tr_{B_1,...,B_{k-1}}[\rho_{AB_1...B_k}] = \rho_{AB}$.
\end{enumerate}

For any state $\rho_{AB}$, any state $\rho_{AB_1...B_k}$ satisfying the three conditions above is called a PPT $k$-symmetric extension of $\rho_{AB}$ on system $B$. As we have seen, if $\rho_{AB}$ is separable, any such extension must necessarily exist. Conversely, as proven in \cite{DPS2}, if any such extension exists for all natural numbers $k$, then $\rho_{AB}$ can be proven to be separable. Note that the same considerations apply to extensions on system $A$, i.e., in principle we could have considered extensions of the type $\rho_{A_1,...,A_kB}$. In practice, to save computational resources, one considers extensions of the system with the smallest dimensionality. 

Now, define $\S^k$ to be the set of all bipartite states admitting a PPT $k$-symmetric extension on the subsystem with the smallest dimensionality. From all the above, it follows that $\S^1\supseteq \S^2\supseteq \ldots \supseteq \S$ and $\lim_{k\to\infty}\S^k = \S$. Verifying that $\rho_{AB}\in S^k$ can be cast as a semidefinite program~\cite{sdp}.

\section{Proof of convergence of the semidefinite hierarchy SDP~1} \label{app2}

\begin{sdp}
	\label{sdp:one}
	Let $\sigma_{AB}$ be a bipartite state, with local dimensions $d_A$ and $d_B$ respectively. We say that $\sigma_{AB} \in \S_D^k$, if there exists a positive semidefinite matrix $\omega$ on the subsystems $AA'B'B$ such that
	\begin{align}
		&\frac{\omega}{D} \in \S^k, \quad \Pi^\dagger_{D}\omega \Pi_{D}=\sigma_{AB} \nonumber\\
		&\Pi_D = \id_A\otimes\ket{\psi^+_{D}}_{A'B'}\otimes \id_B 
		\label{newHier}
	\end{align}
	where the dimensions of $A'$ and $B'$ are both $D$, $\ket{\psi^+_{D}}_{A'B'}:= \sum_{j=1}^{D}\ket{j,j}$ is the non-normalized maximally entangled state in this dimension and the bipartition of $\omega$ relevant for the definition of $\S^k$ is $AA'|B'B$. 
\end{sdp}
\noindent Note that the first condition in (\ref{newHier}) implies $\tr[\omega]=D$.

\begin{prop}
	$\lim_{k \rightarrow \infty} \S_D^k = \S_D$. More precisely, for any $\sigma_{AB}\in \S_D^k$, there exists $\tilde{\sigma}_{AB}\in \S_D$ such that ${\|\sigma_{AB}-\tilde{\sigma}_{AB}\|_1} \leq O\left(\frac{d^2D^4}{k^2}\right)$, where $d=\min\left(d_A,d_B\right)$.
\end{prop}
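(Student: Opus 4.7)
The plan is to reduce convergence of $\S_D^k$ to the known quantitative convergence of the DPS hierarchy $\S^k$ on the enlarged bipartition $AA'|B'B$, and then transfer the approximation back to the $AB$ system through the map $X\mapsto\Pi_D^\dagger X\Pi_D$. The easy inclusion $\S_D\subseteq\S_D^k$ follows by direct construction: for a Schmidt decomposition $\ket{\psi}=\sum_{j=1}^D\ket{u_j}_A\otimes\ket{v_j}_B$, setting $\ket{\alpha}:=\sum_{j,a}\langle a|u_j\rangle\ket{a,j}_{AA'}$ and $\ket{\beta}:=\sum_{j,b}\langle b|v_j\rangle\ket{j,b}_{B'B}$ gives $\ket{\psi}=\Pi_D^\dagger(\ket{\alpha}\otimes\ket{\beta})$. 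Hence any $\sigma_{AB}\in\S_D$ is the image under $\Pi_D^\dagger(\cdot)\Pi_D$ of a separable operator $\omega$ on $AA'|B'B$ with $\omega/D$ separable, so $\sigma_{AB}\in\S_D^k$ for every $k$.

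For the quantitative converse, I would take any $\sigma_{AB}\in\S_D^k$ with witness $\omega$. Since $\omega/D$ is a state on $AA'|B'B$ admitting a PPT $k$-symmetric extension, the quantitative convergence of the DPS hierarchy (see Appendix~\ref{app1} and the bounds in \cite{DPS2,DPS3}) yields a separable state $\tau/D$ with
\be
\left\|\frac{\omega-\tau}{D}\right\|_1 \leq O\!\left(\frac{(dD)^2}{k^2}\right),
\ee
where $dD=\min(d_AD,d_BD)$ is the minimum local dimension of the extended bipartition and $d=\min(d_A,d_B)$. Multiplying by $D$ gives $\|\omega-\tau\|_1\leq O(d^2D^3/k^2)$.

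Define $\hat\sigma_{AB}:=\Pi_D^\dagger\tau\Pi_D$. By the same argument as in the easy inclusion, separability of $\tau$ on $AA'|B'B$ forces $\hat\sigma_{AB}$ to be a nonnegative combination of pure states of Schmidt rank at most $D$, so its renormalization $\tilde\sigma_{AB}:=\hat\sigma_{AB}/\tr[\hat\sigma_{AB}]$ lies in $\S_D$. Using $\|\Pi_D\|_{\mathrm{op}}=\|\ket{\psi^+_D}\|=\sqrt{D}$ and the elementary inequality $\|X^\dagger M X\|_1\leq\|X\|_{\mathrm{op}}^2\|M\|_1$ for any Hermitian $M$ (split $M=M_+-M_-$ into its positive and negative parts and apply $\|X^\dagger N X\|_1=\tr[XX^\dagger N]\leq\|X\|_{\mathrm{op}}^2\tr N$ to each),
\be
\|\sigma_{AB}-\hat\sigma_{AB}\|_1 \leq D\,\|\omega-\tau\|_1 \leq O\!\left(\frac{d^2D^4}{k^2}\right).
\ee
Since $|\tr[\hat\sigma_{AB}]-1|\leq\|\sigma_{AB}-\hat\sigma_{AB}\|_1$, the renormalization step at most doubles this bound, yielding the claimed trace-norm estimate on $\|\sigma_{AB}-\tilde\sigma_{AB}\|_1$.

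The main obstacle is quoting or adapting the correct quantitative DPS theorem with $1/k^2$ scaling: a direct application of the finite quantum de Finetti theorem gives only $O(d^2/k)$, and obtaining the sharper $O(d^2/k^2)$ rate exploits the PPT property of the extensions, as in the strengthened bounds available in the DPS line of work. A secondary nuisance is making sure the implicit constants do not interact badly with the renormalization step when $k$ is small compared to $dD$; this is handled by restricting the asymptotic statement to sufficiently large $k$, which is harmless since for smaller $k$ the bound is vacuous anyway.
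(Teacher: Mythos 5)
Your proof is correct and follows essentially the same route as the paper: the same $\Pi_D$-lifting for the easy inclusion $\S_D\subseteq\S_D^k$, quantitative DPS convergence applied to $\omega/D$ on the $AA'|B'B$ cut, a pull-back through $\Pi_D^\dagger(\cdot)\Pi_D$ costing a factor $D^2$ (one $D$ from $\|\Pi_D\Pi_D^\dagger\|_\infty=D$ and one from $\tr[\omega]=D$), and a final renormalization. The only differences are cosmetic: the paper uses the mixing form $\omega_{sep}=(1-\epsilon_k)\omega+\epsilon_k\omega'$ with $\epsilon_k=O(d^2D^2/k^2)$, which is the result of \cite{NOP} (and \cite{finetti}) rather than of \cite{DPS2,DPS3} as you suggest, whereas you use the equivalent trace-norm form; both yield the same $O\left(\frac{d^2D^4}{k^2}\right)$ bound.
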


\begin{proof}
	Let us first show that $\S_D \subseteq \S_D^k$ for all $k$. Consider $\sigma_{AB}\in \S_D$. Note that any normalized pure state $\ket{\psi}=\sum_{i=1}^Dc_i\ket{\phi_i}\ket{\Phi_i}$ with Schmidt rank at most $D$ can be written as $\ket{\psi}=\Pi^\dagger_D \cdot (\ket{\alpha}_{AA'}\ket{\beta}_{B'B})$, with
	\be
	\ket{\alpha}=\sum_{i=1}^D c_i \ket{\phi_i}_A\ket{i}_{A'},\hspace{2.5mm} \ket{\beta}=\sum_{i=1}^D \ket{i}_{B'}\ket{\Phi_i}_B.
	\ee
	The vector $\ket{\alpha}\ket{\beta}$ has norm $\sqrt{D}$ and is separable with respect to the bipartition $AA'|B'B$. By convexity, we thus have that any $\sigma_{AB}\in \S_D$ can be expressed as $\sigma_{AB}=\Pi^\dagger_D\omega \Pi_D$, for some positive semidefinite $\omega$ that is separable with respect to the bipartition $AA'|B'B$, denoted $\omega \in \S$, and with $\tr[\omega]=D$. Since $\omega /\tr[\omega] \in \S$ implies $\omega /\tr[\omega]\in \S^k$ according to the DPS hierarchy, this implies that $\sigma_{AB} \in \S^k_D$.

	To prove the completeness of the hierarchy (\ref{newHier}), we will invoke known results \cite{NOP, finetti} on the finite approximability of the set of separable states by means of the DPS hierarchy~\cite{DPS1,DPS2,DPS3}. Let $\sigma_{AB} \in \S^k_D$. By \cite{NOP}, we know that there exists a strictly decreasing sequence $(\epsilon_k)_k$, with $\lim_{k\to\infty} \epsilon_k=0$, such that any state $\omega/D \in \S^k$ satisfies
	\be
	\frac{\omega_{sep}}{D}=(1-\epsilon_k)\frac{\omega}{D}+\epsilon_k \frac{\omega'}{D},
	\label{completeness}
	\ee
	\noindent for some state $\omega'/D$ and a separable state $\omega_{sep}/D \in \S$. More specifically, we have $\epsilon_k=O\left(\frac{d^2D^2}{k^2}\right)$ with $d=\min(d_A,d_B)$. For $\sigma_{AB}=\Pi^\dagger_D \omega\Pi_D$ this implies that
	\be
	\Pi^\dagger_D \omega_{sep}\Pi_D=(1-\epsilon_k)\sigma_{AB}+\epsilon_k\Pi^\dagger_D\omega'\Pi_D.
	\label{primer}
	\ee
	Call $\lambda := \tr[\Pi^\dagger_D \omega_{sep}\Pi_D]$ and define the normalized states $\tilde{\sigma} = \Pi^\dagger_D \omega_{sep}\Pi_D/\lambda$, $\sigma' =  \Pi^\dagger_D \omega'\Pi_D /\tr[\Pi^\dagger_D \omega'\Pi_D]$, and the parameter $\epsilon'_k := \epsilon_k\tr[\Pi^\dagger_D \omega'\Pi_D]/\lambda$. Since $\tr[\sigma_{AB}]=1$, we have that
	\be
	\tilde{\sigma}=(1-\epsilon'_k)\sigma_{AB}+\epsilon'_k\sigma',
	\ee
	\noindent where $\tilde{\sigma} \in \mathcal{S}_D$. From the relations $\tr[\omega']=D$ and $\|\Pi_D\Pi^\dagger_D\|_\infty=D$, it follows that $\epsilon'_k\leq \frac{\epsilon_kD^2}{\lambda}$. On the other hand, by \eqref{primer}, we have that $\lambda\geq 1-\epsilon_k$. It follows that $\epsilon'_k\leq \frac{\epsilon_kD^2}{1-\epsilon_k}$. Thus, as $k$ grows, the separation between $\S^k_D$ and $\S_D$ tends to zero as $O\left(\frac{d^2D^4}{k^2}\right)$.
	
\end{proof}

\section{Proof that SDP~2 is an inner approximation to the set of $D$-unfaithful states} \label{app3}

\begin{sdp}
	Let $\rho_{AB}$ be a bipartite state. If there exists $\mu\in [0,1]$ and positive semidefinite operators $M_A$, $M_B$ such that 
	\begin{align} 
		&M_A \otimes \id_B + \id_A \otimes M_B \geq \rho_{AB} \nonumber \\
		&\mu (D-1) = \tr[M_A], \hspace{12mm} \mu\id-M_A\geq 0,\nonumber\\
		&(1-\mu)(D-1)=\tr[M_B], \hspace{2.5mm}(1-\mu)\id-M_B\geq 0,
		\label{criter}
	\end{align}
	\noindent then we say that $\rho_{AB}\in \widetilde{\U}_D$.
\end{sdp}

\begin{prop}
	$\widetilde{\U}_D \subseteq {\U}_D.$
\end{prop}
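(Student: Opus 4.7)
The plan is to unpack what $\rho_{AB}\in\U_D$ really requires and then show each pure-state fidelity is controlled by the two operators $M_A,M_B$ provided by SDP~2. By the discussion around \eqref{eq:fidelitywitness} and the Bourennane et al.\ bound, $\rho_{AB}\in\U_D$ is equivalent to the statement that, for every pure state $\ket{\psi}$ with Schmidt coefficients $\sqrt{\lambda_i(\ket{\psi})}$ in non-increasing order,
\begin{equation}
	\bra{\psi}\rho_{AB}\ket{\psi}\;\leq\;\sum_{i=1}^{D-1}\lambda_i(\ket{\psi}).
\end{equation}
So the task reduces to deriving this pure-state inequality from the operator inequality and the spectral/trace constraints on $M_A$ and $M_B$.

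First I would sandwich by $\ket{\psi}$: the condition $M_A\otimes\id_B+\id_A\otimes M_B\geq\rho_{AB}$ immediately gives
\begin{equation}
	\bra{\psi}\rho_{AB}\ket{\psi}\;\leq\;\tr\!\left[M_A\rho^\psi_A\right]+\tr\!\left[M_B\rho^\psi_B\right],
\end{equation}
where $\rho^\psi_A=\tr_B\proj{\psi}$ and $\rho^\psi_B=\tr_A\proj{\psi}$ share the spectrum $(\lambda_i(\ket{\psi}))_i$. The heart of the argument is then to bound each trace individually; I would write $M_A=\mu\tilde M_A$ with $0\leq\tilde M_A\leq\id$ and $\tr[\tilde M_A]=D-1$, and expand in the eigenbasis of $\rho^\psi_A$. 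Setting $m_i=\bra{\phi_i}\tilde M_A\ket{\phi_i}$, one has $0\leq m_i\leq 1$ and $\sum_i m_i=D-1$, so the linear program $\max\sum_i\lambda_i m_i$ is solved by placing all the weight on the $D-1$ largest eigenvalues, yielding
\begin{equation}
	\tr\!\left[M_A\rho^\psi_A\right]\;\leq\;\mu\sum_{i=1}^{D-1}\lambda_i(\ket{\psi}).
\end{equation}
An identical argument with $M_B=(1-\mu)\tilde M_B$ yields $\tr[M_B\rho^\psi_B]\leq(1-\mu)\sum_{i=1}^{D-1}\lambda_i(\ket{\psi})$. Adding the two estimates and using $\mu+(1-\mu)=1$ produces the desired bound, and hence $\rho_{AB}\in\U_D$.

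The proof is essentially routine once the pieces are lined up; there is no real obstacle. The only substantive step is the Ky~Fan–type maximization that controls $\tr[M_A\rho^\psi_A]$, and even this is elementary because the constraints $0\leq\tilde M_A\leq\id$ and $\tr[\tilde M_A]=D-1$ translate directly into box-plus-simplex constraints on the diagonal entries $m_i$. What is worth emphasizing in the write-up is why the split of the trace budget between $M_A$ and $M_B$ via the parameter $\mu$ is free: each local bound scales linearly in its prefactor, so the combined bound is independent of $\mu$, which is exactly what makes the SDP feasibility condition sufficient for membership in $\U_D$.
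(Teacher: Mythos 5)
Your proposal is correct and follows essentially the same route as the paper's proof: sandwich the operator inequality with $\proj{\psi}$, expand $M_A$ in the eigenbasis of the reduced state, and bound $\tr[M_A\rho^\psi_A]$ by a Ky~Fan--type maximization using $0\leq \bra{\phi_i}M_A\ket{\phi_i}\leq\mu$ and $\tr[M_A]=\mu(D-1)$, then repeat for $B$ and add. The only difference is presentational --- the paper establishes the linear-programming bound via explicit correction terms $r_j$ rather than citing the standard extreme-point argument --- so there is nothing substantive to change.
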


\begin{proof}
	Let $\rho_{AB} \in \widetilde{\U}_D$. Then for any pure state $\ket{\psi}$ and $\Lambda_{AB}:=\proj{\psi}$,
	
	\be
	\tr[M_A\Lambda_A]+\tr[M_B\Lambda_B]\geq \tr[\rho_{AB}\Lambda_{AB}],
	\label{intermediate}
	\ee
	which is obtained by multiplying $M_A \otimes \id_B + \id_A \otimes M_B \geq \rho_{AB}$ with $\Lambda_{AB}$ from the right and taking the trace.

	Let $\Lambda_A=\sum_{i=1}^{n_A} \lambda_i(\ket{\psi}) \proj{\psi_i}$ denote the spectral decomposition of $\Lambda_A$, where $(\sqrt{\lambda_i(\ket{\psi})})_i$ are the Schmidt coefficients of $\ket{\psi}$ ordered such that $\lambda_1(\ket{\psi}) \geq \lambda_2(\ket{\psi}) \geq \ldots \geq \lambda_{n_A}(\ket{\psi})$ and $n_A$ is the Hilbert space dimension of system $A$. Then, $\tr[M_A\Lambda_A]=\sum_{i=1}^{n_A}\lambda_i(\ket{\psi}) p_i$ for ${p_i:= \bra{\psi_i}M_A\ket{\psi_i}}$. Since $M_A \geq 0$ and $\mu \id \geq M_A$, we have that $0\leq p_i\leq \mu$.  Furthermore, we have 
	\be 
	\mu (D-1) = \tr[M_A] = \sum_{i=1}^{n_A} p_i,
	\label{trA}
	\ee
	since $\{ \ket{\psi_i} \}_i$ is an orthonormal basis. Now, for  $j\geq D$ we have that
	
	\be 
	r_j :=  p_j\frac{\sum_{i=1}^{D-1}(\mu-p_i)\lambda_i(\ket{\psi})}{\sum_{k=1}^{D-1}(\mu-p_k)}-p_j\lambda_j(\ket{\psi})\geq 0,
	\ee
	due to the positivity of the $p_j$ and the ordering of the $\lambda_j(\ket{\psi})$. Hence, it follows that
	\be
	\tr[M_A\Lambda_A]=\sum_{i=1}^{n_A} \lambda_i(\ket{\psi}) p_i \leq \sum_{i=1}^{n_A}\lambda_i(\ket{\psi}) p_i+\sum_{j=D}^{n_A} r_j =\mu\sum_{i=1}^{D-1}\lambda_i(\ket{\psi}),
	\ee
	where the last equality follows since $\sum_{j=D}^{n_A} r_j=\sum_{i=1}^{D-1} \mu \lambda_i(\ket{\psi})-\sum_{i=1}^{n_A}\lambda_i(\ket{\psi}) p_i$, which can be certified with a straightforward calculation using \eqref{trA}.

	With an analogous argument for $\Lambda_B$, we obtain  $\tr[M_B\Lambda_B]\leq (1-\mu)\sum_{i=1}^{D-1}\lambda_i(\ket{\psi})$. Therefore, the left hand side of  \eqref{intermediate} is upper bounded as 
	\be
	\tr[\rho_{AB}\Lambda_{AB}] \leq \sum_{i=1}^{D-1}\lambda_i(\ket{\psi}).
	\label{eqX}
	\ee 
	Recall from the main text, that  $\sum_{i=1}^{D-1}\lambda_i(\ket{\psi})$ is a lower bound on the fidelity $F$ in a dimension-$D$-witness $F \id -\ketbra{\psi}$.
	Since \eqref{eqX} holds for any pure state $\ket{\psi}$, there does not exist a pure-state fidelity, dimension-$D$-witness for $\rho_{AB}$.
	
\end{proof}

\section{Sampling quantum states according to various measures} \label{app4}

For completeness we provide details regarding the algorithms we used for sampling random states for the numerical considerations presented in the main text.

\bigskip

\noindent{\bf Sampling states according to the Hilbert-Schmidt metric}

\smallskip

\noindent To sample a random state $\rho_{AB}$ on an $n$-dimensional Hilbert space according to the Hilbert-Schmidt metric we proceed as follows~\cite{Zyczkowski1, Zyczkowski2}.
\begin{itemize}
	\item Randomly sample two $n \times n$ matrices $M_R$ and $M_I$ by sampling each entry of each matrix from a normal distribution with mean $0$ and variance $1$ and consider $M=M_R+i  M_I$, where $i$ is the imaginary unit.
	\item Take $M'=M M^\dagger$, where $^\dagger$ denotes the conjugate transpose.
	\item  Let $\rho_{AB}=M'/\tr(M')$.	
\end{itemize}

\bigskip

\noindent{\bf Sampling states according to the Bures metric}

\smallskip

\noindent To sample a random state $\rho_{AB}$ on an $n$-dimensional Hilbert space according to the Bures metric we follow the following procedure~\cite{Osipov}.
\begin{itemize}
	\item Randomly sample two $n \times n$ matrices $M_R$ and $M_I$ by sampling each entry of each matrix from a normal distribution with mean $0$ and variance $1$ and consider $M=M_R+i  M_I$, where $i$ is the imaginary unit.
	\item Sample a random $n \times n$ unitary matrix $U$ from the Haar measure.
	\item Take $M'= (\id + U) M M^\dagger (\id + U^\dagger)$, where $^\dagger$ denotes the conjugate transpose.
	\item  Let $\rho_{AB}=M'/\tr(M')$.	
\end{itemize}

\bigskip

\noindent{\bf Sampling real states}

\smallskip

\noindent To sample a random real state $\rho_{AB}$ on an $n$-dimensional Hilbert space we used the following procedure. Note that this is analogous to sampling according to the Hilbert-Schmidt metric but restricted to the real part.
\begin{itemize}
	\item Randomly sample an $n \times n$ matrix $M$ by sampling each entry from a normal distribution with mean $0$ and variance $1$.
	\item Take $M'=M M^T$, where $^T$ denotes the transpose.
	\item  Let $\rho_{AB}=M'/\tr(M')$.	
\end{itemize}

\twocolumngrid

\bibliographystyle{apsrev}

\end{document}